\setlist{leftmargin=1.25cm}
\theoremstyle{plain}
\newtheorem{theorem}{Theorem}[section]
\theoremstyle{definition}
\theoremstyle{remark}
\newtheorem{remark}{Remark}
\begin{document}


\title{Filtering and smoothing estimation algorithms from uncertain nonlinear observations with time-correlated additive noise and random deception attacks}

\author{
\name{R. Caballero-{\'A}guila\textsuperscript{a}\thanks{This is an original manuscript of an article published by Taylor \& Francis in INTERNATIONAL JOURNAL OF SYSTEMS SCIENCE on March 19 2024, available at: https://doi.org/10.1080/00207721.2024.2328781.},
J. Hu\textsuperscript{b} and
J. Linares-P{\'e}rez\textsuperscript{c}}
\affil{\textsuperscript{a}Departamento de Estad\'istica. Universidad de Ja{\'e}n.  Paraje Las Lagunillas. 23071. Ja{\'e}n. Spain; \\ \textsuperscript{b}Key Laboratory of Advanced Manufacturing and Intelligent Technology, Ministry of Education, Harbin University of Science and Technology, Harbin 150080, China; \\
\textsuperscript{c}Departamento de Estad\'istica. Universidad de Granada. Avenida Fuentenueva. 18071. Granada. Spain}
}

\maketitle

\begin{abstract}
This paper discusses the problem of estimating a stochastic signal from nonlinear uncertain observations with time-correlated additive noise described by a first-order Markov process. Random deception attacks are assumed to be launched by an adversary, and both this phenomenon and the uncertainty in the observations are modelled by two sets of Bernoulli random variables.  Under the assumption that the evolution model generating the signal to be estimated is unknown and only the mean and covariance functions of the processes involved in the observation equation are available, recursive algorithms based on linear approximations of the real observations are proposed for the least-squares filtering and fixed-point smoothing problems. Finally, the feasibility and effectiveness of the developed estimation algorithms are verified by a numerical simulation example, where the impact of uncertain observation and deception attack probabilities on estimation accuracy is evaluated.
\end{abstract}


\begin{keywords}
Nonlinear observation models;
least-squares estimation;
missing measurements;
time-correlated noise;
random deception attacks
\end{keywords}


\section{Introduction}
\label{sec:Intro}

Although state-space models can theoretically be divided into linear and nonlinear models, in practice there are no strictly linear models. So-called linear systems are nothing more than approximations, usually valid over a limited range of values of the variables involved in the model. Even though many systems have a sufficiently high degree of approximation to linearity, we eventually encounter systems that deviate significantly from linear behaviour, even within a limited operating range. In such cases, the accuracy of linear estimation techniques diminishes and it becomes necessary to explore nonlinear estimation approaches \citep{Simon_2006}. Despite these considerations, there has been a huge amount of work on linear estimation for linear systems and the mathematical tools available for this kind of systems are much more accessible and well understood. For this reason, linear approximations are often used to adapt linear estimation techniques to the nonlinear problems that are encountered in many branches of practical domains, such as computer vision, communications, navigation and tracking systems or econometrics and finance \citep{Hu-Wang-Gao_Springer_2015}.

\vskip.2cm
The performance of estimation algorithms is often affected by the occurrence of random uncertainties, a common one being the presence of missing measurements (also called uncertain observations). Indeed, this phenomenon is unavoidable in many real-world scenarios where the information received by the estimator side is usually incomplete, due to several causes (e.g., random failures in the measurement mechanism, accidental loss of some data packets or inaccessibility of data at certain times). In these situations, it is necessary to consider the influence of this incomplete information when designing estimation algorithms.
In the context of linear systems, the estimation problem for multisensor stochastic uncertain systems with missing measurements and unknown measurement disturbances is addressed in \citet{Pang_Sun_IEEESENSORS_2015}. Using a new augmented state approach, three robust Kalman-like filtering algorithms are proposed in \citet{Ran_Deng_INFSCI_2020} for a class of multisensor systems with mixed uncertainties including random delays, missing measurements, multiplicative
noises and uncertain noise variances. The distributed filtering problem for systems with missing measurements is studied in \citet{Wen_et_al_JFI_2020}, under the assumption that the state noises and the measurement noises are correlated.
In the context of nonlinear systems, the impact of incomplete information on the estimation problem has also been analyzed, e.g., in \citet{Hu_Wang_Caballero_CNSNS_2023}, where a class of singular systems subject to random delays, packet dropouts and nonlinearities,
and in \citet{Han_et_al_2017_NEUROCOMP2017}, where a distributed $H_\infty$-consensus filtering approach is proposed for nonlinear systems with missing measurements. More recently, a particle filter is proposed in \citet{Ma_Wang_CNSNS_2022} for nonlinear systems with time-varying delays and unknown noise distribution and, in \citet{Hu_Hu_caballero_et_al_INFSCI2023}, a distributed resilient fusion filtering algorithm is designed for nonlinear systems with dynamic event-triggered mechanism under the missing-measurement phenomenon. A complete survey on estimation algorithms for nonlinear systems with communication constraints causing random delays, missing/fading measurements or randomly occurring incomplete information can be found in \citet{Hu_Hu_Yang_SSCE_2020}.

\vskip.2cm
Many conventional estimation algorithms are typically based on the assumption that the additive noise in measurements is either white or finite-step correlated. However, in practical engineering applications, infinite-step correlated measurement noises can be prevalent, especially when the sampling frequency is sufficiently high, leading to significant correlation over two or more consecutive sampling periods. Over the past decade, the estimation problem has been widely studied under the assumption that the infinite-step time-correlated noise is a first-order Markov process.
The state estimation problem for linear systems with multiplicative
noise and time-correlated additive noise in the measurements has been investigated in \citet{Liu_IEEETSP_2015} and an improved steady-state filter has been designed in  \citet{Liu_Shi_IEEETAC_2019}.
Least-squares estimation algorithms are proposed in \citet{Garcia_et_al_IJSS2020}, considering random delays in the transmission connections that are modelled by Markov chains, and in \citet{Caballero_SENSORS_2022}, considering random parameter matrices in the measurement model and random packet dropouts for which two different compensation scenarios are compared.
Distributed fusion filtering algorithms for uncertain systems with random delays and packet loss prediction compensation are designed in \citet{Caballero_IJSS_2023}.
Recursive estimation algorithms for linear stochastic uncertain
systems with time-correlated additive noises and packet dropout compensations are proposed in \citet{Ma_Sun_SP2020} and a similar study for nonlinear systems is carried out in \citet{Cheng_IET2021}.

\vskip.2cm
In addressing the estimation problem, security emerges as a crucial consideration that should not be disregarded. The vulnerability to cyber attacks is well-documented in the literature \citep[see, for instance,][]{Mahmoud_2019,Sanchez_2019}. Particularly, the estimation problem in networked systems exposed to deception attacks has been the focus of numerous significant research endeavors.
In general, deception attackers aim to compromise data integrity by maliciously introducing random falsifications into their information. Several research studies have delved into security-guaranteed filtering problems, such as the investigation of centralized solutions for linear time-invariant stochastic systems with multirate-sensor fusion under deception attacks in \citet{Wang_2018_IJFI}. The exploration of the $H_\infty$-consensus filtering problem for discrete-time systems with multiplicative noises and deception attacks is documented in \citet{Han_2019}. Additionally, the study of the chance-constrained $H_\infty$ state estimation problem for a class of time-varying neural networks, subject to measurement degradation and randomly occurring deception attacks, is presented in \citet{Qu_2022}.
Also, the distributed estimation problem in sensor networks  has been addressed under various security threats. Examples include investigations under false data injection attacks in \citet{Yang-et-al-2019-Automatica} and under deception attacks in \citet{Caballero_SENSORS_2023}, \citet{Xiao-et-al-2020}, \citet{Ma-Sun-Sensors_2023} and \citet{Ma-et-al-IEEE_TSIPN_2021}.

\vskip.2cm
Motivated by the preceding discussion, our aim is to address the least-squares (LS) estimation problem of signals using nonlinear uncertain observations (missing measurements) with time-correlated additive noise modeled by a first-order Markov process, in the presence of random deception attacks. The proposed recursive algorithms, based on linear approximations, offer a novel approach to mitigating the impact of missing measurements and deception attacks in signal estimation.
The main contributions of this paper are summarized as follows:
{\em (a)} A covariance-based estimation approach is used, so the evolution model of the signal to be estimated does not need to be known.
{\em (b)} The class of stochastic signals investigated in this paper is quite comprehensive, as the assumptions under which our study is valid are verified by a great variety of stationary and non-stationary signals.
{\em (c)} The direct estimation of the time-correlated additive noise avoids the use of the differencing method or vector augmentation.
{\em (d)} Despite the fact that the simultaneous consideration of uncertain observations, time-correlated noise and random attacks adds complexity to the model, the proposed filtering and fixed-point smoothing algorithms keep the advantages of recursivity and computational simplicity.
{\em (e)} The proposed  estimators have a satisfactory performance even in the presence of high probability of missing measurements and/or high probability of successful random deception attacks.

\vskip .2cm

The paper is structured as follows. Section \ref{sec:Model} details the characteristics of the observation model under consideration. The main results are presented in Section \ref{sec:main results}, which includes the problem statement (subsection \ref{subsec:problem}) and the derivation of the proposed filtering  and fixed-point smoothing algorithms (subsections \ref{subsec:filtering alg} and \ref{subsec:smoothing alg}, respectively). Section \ref{sec:example} conducts a simulation study to showcase the effectiveness of the proposed filtering and fixed-point smoothing estimators, additionally exploring the impact of uncertain observation and random deception attacks probabilities on estimation performance. Finally, Section \ref{sec:conclusions} provides some concluding remarks.

\vskip .2cm

The notations and acronyms that are used throughout the paper are summarized in Table 1.

\begin{table}[hh]
\centering
    \caption{Notations and acronyms used throughout the paper. All vector and matrix dimensions are assumed to be compatible with algebraic operations unless explicitly stated.
}
    \vspace{2mm}
\begin{tabular}{ll}
\hline
$\mathbb{R}^{n}$ & Set of $n$-dimensional real vectors\\
\rowcolor[HTML]{EFEFEF}
$\mathbb{R}^{n\times m}$ & Set of $n\times m$ real matrices\\
$\frac{\partial f(x)}{\partial x}\Big|_{x=a}$ & $m \times n$ Jacobian matrix of $f: \mathbb R^{n} \rightarrow \mathbb R^{m}$ at a point $a \in \mathbb R^{n}$\\
\rowcolor[HTML]{EFEFEF}
$M^T$ and $M^{-1}$  &   Transpose and inverse of   matrix $M$\\
$\delta_{j,i}$ &  Kronecker delta function\\
\rowcolor[HTML]{EFEFEF}
$\mathbb{E}[a]=\overline{a}$ &  Mathematical expectation of a random variable or vector $a$\\
$\Sigma_{k,s}^{{a}}$&  Covariance function of a stochastic process $\{a_k\}_{k\geq 1}$: \\
&  $\Sigma_{k,s}^{a}=Cov[a_k,a_s]=E\big[\big(a_k-\overline{a}_k\big)\big(a_s-\overline{a}_s\big)^T\big], \ \ \Sigma_{k}^{a}=Cov[a_k]$ \\
\rowcolor[HTML]{EFEFEF}
$\widehat{a}_{k/s}$ & Optimal linear estimator of the vector $a_k$ based on $\big\{ y_1,\ldots,y_s\big\}$\\
LS & Least-squares\\
\rowcolor[HTML]{EFEFEF}
OPL & Orthogonal projection lemma\\
EKF & Extended Kalman filter\\
\cline{1-2}
\end{tabular}

\end{table}

\section{Mathematical model and preliminaries}
\label{sec:Model}

\subsection{Signal process}

Consider a signal process $\{x_k\}_{k\geq 1}$ that must be estimated from a set of noisy measurements. Assume that the mathematical model describing the evolution of this signal process is not known, but its mean and covariance functions satisfy the
following hypothesis:

\begin{itemize}
  \item [\em (H1)]  The $n_x$-dimensional signal  $ \{x_k\}_{k \geq 1}$ is a second-order  zero-mean random process and its
covariance function, $\Sigma_{k,s}^x = Cov[x_{k}, x_s]$, can be factorized as
$$ \Sigma_{k,s}^x =   A_k  B^T_s,\  s\leq k, $$ where $  A_k,  B_s \in
\mathbb{R}^{n_x \times p}$ are known  matrices.

\end{itemize}

\begin{remark}
  Hypothesis \emph{(H1)} on the signal covariance function is verified by a great variety of stationary and non-stationary signals \citep{Caballero_SENSORS_2022}. Estimation approaches based on this hypothesis, rather than the state-space model, thus provide a comprehensive framework to obtain general algorithms that cover a wide range of practical situations.
\end{remark}

\subsection{Nonlinear uncertain measurements with time-correlated additive noise}

The signal estimation will be performed from $n_z$-dimensional nonlinear outputs that are perturbed by additive noise and subject to random failures, causing that some measured outputs are only noise. This phenomenon is described by a sequence of Bernoulli random variables, $ \{\gamma_k\}_{k \geq 1}$. When $\gamma_k=1$, the actual measurement value is equal to the original measurement value, while $\gamma_k=0$ means that the actual measurement is only noise. More specifically, the actual measurements are described by the following mathematical model:

\begin{equation}\label{actual_measurements}
  z_k=\gamma_k h_k(x_k)+v_k, \ \ k\geq 1,
\end{equation}

\noindent with the following hypotheses:

\begin{itemize}
   \item [\em (H2)] $ \big\{\gamma_k\big\}_{ k\geq 1}$ is a sequence of  independent  Bernoulli random variables with  known mean function $\overline{\gamma}_k=\mathbb{E}[\gamma_k], \  k\geq 1.$

   \item [\em (H3)] For all $ k\geq1$, the function $h_k: \mathbb R^{n_x} \rightarrow \mathbb R^{n_z}$ is an analytic function.
\end{itemize}

\begin{remark}
Hypothesis \emph{(H3)} guarantees that $h_k$ is infinitely differentiable and,  for every $x_0 \in \Bbb R^{n_x}$,  its Taylor series about $x_0$ converges to the function in some neighborhood of $x_0$. Typical examples of analytic functions are: polynomials, exponential functions, logarithmic functions, trigonometric functions and power functions \citep{Krantz_2002}.
\end{remark}

\vskip .15cm

In many engineering applications,  the additive noise perturbing the observations has an infinite-step correlation; to better model such situations, the following first-order Markov model is used:

\begin{equation}\label{time-correlated_noise}
v_k=D_{k-1}v_{k-1}+u_{k-1}, \ \ k\geq 1,
\end{equation}

\noindent where $D_k$ is non-singular for all $k\geq 0$ and the following hypotheses hold:

\begin{itemize}
   \item [\em (H4)] $v_{0}$ is a zero-mean random vector with known covariance $\Sigma^v_{0}=Cov[v_{0}].$

   \item [\em (H5)] The noise process $\{u_{k}\}_{k\geq 0}$ is a zero-mean white sequence with known covariance function $\Sigma^{u}_{k}=Cov[u_{k}], \  k\geq 0$, and it  is independent of the initial vector $v_{0}$.
\end{itemize}

\begin{remark}
Hypotheses \emph{(H4)} and \emph{(H5)}, together with the non-singularity of $D_k$, guarantee that the covariance function of the time-correlated additive noise, $\Sigma_{k,s}^v = Cov[v_{k}, v_s]$, admits the following factorization:
\begin{equation*}\label{Noise covariance factorization}
  \Sigma_{k,s}^v   = E_k F_s^T , \ \ \ s \leq k,
\end{equation*}
where $E_k=D_{k-1} \cdots D_{0} $, $F_s= \Sigma_{s}^v (E_s^{-1})^T$ and $\Sigma_{s}^v= Cov[v_{s}]$ is recursively obtained as $$\Sigma_s^v = D_{s-1} \Sigma_{s-1}^v D_{s-1}^T + \Sigma^{u}_{s-1},\ s\geq 1.$$
\end{remark}

\subsection{Random deception attacks}

In many practical situations, a crucial issue that cannot be ignored in the study of the estimation problem is the possible occurrence of cyber-attacks.
In particular, deception attacks constitute a significant threat that attempts to compromise data integrity by maliciously and randomly falsifying information. In this type of attacks, the signal injected by the attacker, $\breve{z}_k$, aims to neutralise the actual measurement, $z_k$, and replace it with a deceptive noise, $w_k$. Specifically,
\[\breve{z}_k=-z_k+w_k, \ \ k\geq 1,\]
where
\begin{itemize}
   \item [\em (H6)]  The noise process $\{w_{k}\}_{k\geq 1}$ is a zero-mean white sequence with known covariance function $\Sigma^{w}_{k}=Cov[w_{k}], \  k\geq 1.$
\end{itemize}

To describe the fact that, in practice, cyber-attacks are often unpredictable and random, a sequence of Bernoulli random variables is adopted. More specifically, the following model with stochastic deception attacks is considered to describe the measurements processed for the estimation:
\[y_k=z_k+\lambda_k\breve{z}_k, \ \ k\geq 1,\]
which can be equivalently rewritten as
\begin{equation}\label{available_measurements}
y_k=(1-\lambda_k)z_k+\lambda_k w_k, \ \ k\geq 1,
\end{equation}
\noindent where

\begin{itemize}
      \item [\em (H7)] $ \big\{\lambda_k\big\}_{ k\geq 1}$   is a sequence of  independent  Bernoulli random variables with  known mean function $\overline{\lambda}_k=\mathbb{E}[\lambda_k], \  k\geq 1.$
\end{itemize}

\begin{remark}The binary values of $\lambda_k$ indicate if the adversary actually launch the attack ($\lambda_k=1$) or not ($\lambda_k=0$). When the actual measurement is  attacked, $y_k=w_k$ and only noise will be processed. Otherwise, if no attack is injected, $y_k=z_k$ and the actual measurement is processed for the estimation.
\end{remark}

\vskip.15cm

Finally, the following independence hypotheses on the processes involved in the considered model is required to
address the estimation problem.

\begin{itemize}
   \item [\em (H8)] The signal process $\{x_k\}_{ k\geq1}$ and the processes  $\{\gamma_k\}_{ k\geq1}$, $\{v_k\}_{ k\geq1}$,
  $\{w_k\}_{ k\geq 1}$  and $\{\lambda_k\}_{ k\geq1}$
      are mutually independent.
\end{itemize}

\subsection{Linearized observations}

Under hypotheses \emph{(H1)--(H8)}, our goal is to obtain recursive algorithms for the filtering and fixed-point smoothing problems; that is, to address  the estimation problem of the signal $x_k$ given the nonlinear observations $\{y_1, \ldots , y_L\}, \ L\geq k$. To this end, we will use a similar reasoning to that used to derive the extended Kalman filter. The problem is thus reduced to linearizing the observation equation (\ref{actual_measurements}) and then inserting such linearized observations into (\ref{available_measurements}) to calculate the LS linear estimator of the signal from the resulting measurements.

More precisely, from hypothesis \emph{(H3)}, assuming knowledge of a nominal trajectory of the
signal, $\{x_k^0\}_{k\geq 1}$, the function $h_k$  can be expanded in
Taylor series about $x_k^0$:

$$h_k(x)=h_k(x_k^0)+ \frac{\partial h_k(x)}{\partial x}\Big|_{x=x_k^0}(x-x_k^0)+ \cdots, \ \ k\geq 1,$$

Neglecting the terms of order greater than one in this Taylor expansion, equation (\ref{actual_measurements}) can be approximated by the following \emph{linearized observation equation}
\begin{equation}\label{linearized actual_measurements}
  z_k=\gamma_k (H_k x_k + C_k) +v_k, \ \ k\geq 1,
\end{equation}
where
$$H_k=\displaystyle \frac{\partial h_k(x)}{\partial x}\Big|_{x=x_k^0}, \ \quad C_k=h_k(x_k^0)-H_kx_k^0.$$

The above equation is a linear equation affected by the binary multiplicative noise, $\{\gamma_k\}_{k\geq 1}$, and the time-correlated additive noise, $\{v_k\}_{k\geq 1}$. Our aim is to derive recursive estimation algorithms by using the linear approximation (\ref{linearized actual_measurements}) in the equation for the available observations (\ref{available_measurements}).

\section{Main results}
\label{sec:main results}

\subsection{Problem statement and innovation approach}
\label{subsec:problem}

Our goal  is to obtain recursive algorithms for the LS linear filter and fixed-point smoother of the signal $x_k$ from the available observations (\ref{available_measurements}), based on the linearized measurements (\ref{linearized actual_measurements}).
For this purpose, we will use an \emph{innovation approach}. According to this approach, the  observation
process  is transformed into an equivalent innovation process and the LS linear estimator,
$\widehat{\zeta}_{k/L}$, of a random vector $\zeta_k$ based on a set
of observations $\left\{y_j, \ 1\leq j\leq L\right\}$, can be
expressed  as a linear combination of the innovations
as follows:
\begin{equation}\label{general expression estimators}
  \widehat{\zeta}_{k/L}=\displaystyle{\sum_{j=1}^{L}}
 {\mathcal{S}}^{\zeta}_{k,j} (\Sigma_{j}^{\eta})^{-1} \eta_j, \  \  k,L \geq 1,
\end{equation}
where   ${\mathcal{S}}^{\zeta}_{k,j}=\mathbb{E}[\zeta_k\eta^{T}_j]$,  $\eta_j= y_j- \widehat{y}_{j/j-1}$ is the innovation at time $j$ and $\Sigma^{\eta}_{j}=Cov[\eta_j]$.

Using (\ref{available_measurements}) and the model hypotheses, the innovation can be written as
 \begin{equation}\label{inno}
   \eta_k = y_k-(1-\overline{\lambda}_k)\widehat{z}_{k/k-1}, \  \  k \geq 1,
 \end{equation}
and, from the linear approximation (\ref{linearized actual_measurements}), it is clear that the prediction estimator of $z_k$ can be approximated by
\begin{equation}\label{pred_z}
 \widehat{z}_{k/k-1} = \overline{\gamma}_k\big(H_k \widehat{x}_{k/k-1}+C_k\big) + \widehat{v}_{k/k-1}.
\end{equation}

\vskip.15cm
Expressions (\ref{general expression estimators})--(\ref{pred_z}) are the key points for the derivation of the recursive filtering and fixed-point smoothing algorithms that will be presented in the following subsections.

\subsection{Recursive filtering algorithm}
\label{subsec:filtering alg}

For the sake of convenience, we introduce the following notations:
 \begin{equation}\label{Gamma}
    \Gamma^{a}_k = \left\{\begin{array}{ll}
                           \overline{\gamma}_kH_k B_k,& \ a=x, \\
                            F_k, & \ a=v,
                          \end{array}
    \right.
  \end{equation}
 \begin{equation}\label{Delta}
    \Delta^{a}_k = \left\{\begin{array}{ll}
                           \overline{\gamma}_kH_k A_k,& \ a=x, \\
                            E_k, & \ a=v.
                          \end{array}
    \right.
  \end{equation}

\begin{theorem}\label{Th_filter}
  Consider the observation model (\ref{actual_measurements})--(\ref{available_measurements}), where the processes involved satisfy hypotheses (H1)--(H8). Then,
the innovation $\eta_k$ is calculated as
  \begin{equation}\label{eta_innov_algorithm}
    \eta_k = y_k - (1-\overline{\lambda}_k)\big(\Delta^{x}_k e^{x}_{k-1} + \Delta^{v}_k e^{v}_{k-1}+\overline{\gamma}_k C_k \big), \ \ k\geq 1,
  \end{equation}
  and its covariance matrix $\Sigma^{\eta}_k$ satisfies
  \begin{equation}\label{Xi_Covarianza innov}
    \begin{array}{ll}
      \Sigma^{\eta}_k= &  \Sigma^{y}_k - (1-\overline{\lambda}_k)^2\Big[\left(
      \Delta_k^x T_{k-1}^{xx}  + \Delta_k^v T_{k-1}^{vx}  \right) (\Delta_k^{x})^T\\
   & + \left(
      \Delta_k^x T_{k-1}^{xv}  + \Delta_k^v T_{k-1}^{vv}  \right) (\Delta_k^{v})^T  +\overline{\gamma}^2_k C_k C_k^{T} \Big],  \ \ k\geq  1,
    \end{array}
  \end{equation}
  with
  \begin{equation}\label{Covarianzas y z}
    \begin{array}{l}
    \Sigma^{y}_k =(1-\overline{\lambda}_k)\Sigma^{z}_{k}+\overline{\lambda}_k \Sigma^{w}_{k},\quad k\geq 1, \\
    \Sigma^{z}_{k}= \overline{\gamma}_k\left(H_{k}A_kB^{T}_kH^{T}_{k}+  C_k C_k^{T} \right)+E_k F_k^T ,\quad k\geq 1.
    \end{array}
  \end{equation}
   The vectors $e^a_k$ ($a=x,v$) are recursively obtained by
   \begin{equation}\label{e_a_k recursion}
    e^{a}_k=e^{a}_{k-1}+\Psi^{a}_k(\Sigma^{\eta}_k)^{-1}\eta_k,\ \ k\geq 1; \ \ e^{a}_0=0,
  \end{equation}
  where
  \begin{equation}\label{Psi}
    \Psi^{a}_k=
(1-\overline{\lambda}_k)\big(\Gamma_k^a - \Delta^{x}_k T^{xa}_{k-1} - \Delta^{v}_k T^{va}_{k-1} \big)^T , \ \ k\geq 1.
  \end{equation}
  The matrices $T^{ab}_k = \mathbb{E}[e^a_k (e^b_k)^T]$ ($a,b=x,v$) are also recursively calculated by
   \begin{equation}\label{T_ab_k_recursion}
    T^{ab}_k=T^{ab}_{k-1}+\Psi^{a}_k(\Sigma^{\eta}_k)^{-1}(\Psi^{b}_k)^T, \ \ k\geq 1; \ \ T^{ab}_0=0.
  \end{equation}
  The filtering estimator of the signal,
  $\widehat{x}_{k/k}$,
  is then computed by
  \begin{equation}\label{filter}
    \widehat{x}_{k/k}=A_k{e^{x}_k}, \ \ k\geq 1.
  \end{equation}

\end{theorem}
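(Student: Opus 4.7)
The plan is to combine the innovation representation $\widehat{\zeta}_{k/L}=\sum_{j=1}^{L}\mathcal{S}^{\zeta}_{k,j}(\Sigma^{\eta}_j)^{-1}\eta_j$ with the factorized covariances $\Sigma^{x}_{k,s}=A_k B_s^T$ from (H1) and $\Sigma^{v}_{k,s}=E_k F_s^T$ implied by the Markov model (H4)--(H5), so that each sum collapses into the product of a time-$k$ matrix ($A_k$ or $E_k$) with a running accumulator ($e^{x}_k$ or $e^{v}_k$). Throughout, the Bernoulli variables $\gamma_k$ and $\lambda_k$ are handled by invoking their independence from the other processes (H2, H7, H8) together with the identities $\gamma_k^2=\gamma_k$ and $\lambda_k^2=\lambda_k$.

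As a first step, I would re-derive $\widehat{y}_{k/k-1}=(1-\overline{\lambda}_k)\widehat{z}_{k/k-1}$ with $\widehat{z}_{k/k-1}=\overline{\gamma}_k(H_k\widehat{x}_{k/k-1}+C_k)+\widehat{v}_{k/k-1}$ by applying the OPL term by term to $y_k=(1-\lambda_k)\gamma_k H_k x_k+(1-\lambda_k)\gamma_k C_k+(1-\lambda_k)v_k+\lambda_k w_k$: the Bernoulli factors can be pulled out as their means (by independence) and the summand in $w_k$ has zero one-step predictor. This fixes the shape of $\eta_k$ modulo knowing how to express $\widehat{x}_{k/k-1}$ and $\widehat{v}_{k/k-1}$.

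The core step is a coupled induction on $j\leq k$ proving that
\[
\mathbb{E}[x_k\eta_j^T]=A_k\Psi^{x}_j\qquad\text{and}\qquad \mathbb{E}[v_k\eta_j^T]=E_k\Psi^{v}_j.
\]
Assuming both claims for all indices below $j$, the pairwise orthogonality of the innovations and the definition of $e^{a}_{j-1}$ yield $\mathbb{E}[x_k(e^{x}_{j-1})^T]=A_k T^{xx}_{j-1}$ and $\mathbb{E}[x_k(e^{v}_{j-1})^T]=A_k T^{xv}_{j-1}$. Combining this with $\mathbb{E}[x_k y_j^T]=(1-\overline{\lambda}_j)\overline{\gamma}_j A_k B_j^T H_j^T=(1-\overline{\lambda}_j)A_k(\Gamma^{x}_j)^T$, which follows from (H1), (H8) and the linearized measurement equation, and with the expansion $\widehat{z}_{j/j-1}=\Delta^{x}_j e^{x}_{j-1}+\Delta^{v}_j e^{v}_{j-1}+\overline{\gamma}_j C_j$, subtracting and transposing (using $(T^{xx}_{j-1})^T=T^{xx}_{j-1}$ and $(T^{xv}_{j-1})^T=T^{vx}_{j-1}$) reproduces exactly the $\Psi^{x}_j$ stated in the theorem. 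The identity for $v_k$ is entirely parallel, with $\mathbb{E}[v_k v_j^T]=E_k F_j^T=E_k(\Gamma^{v}_j)^T$ replacing the $x_k$ computation.

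Once these identities are in hand, $\widehat{x}_{k/k-1}=A_k e^{x}_{k-1}$, $\widehat{v}_{k/k-1}=E_k e^{v}_{k-1}$ and $\widehat{x}_{k/k}=A_k e^{x}_k$ follow by inserting them into the innovation representation, and the explicit formula for $\eta_k$ is obtained by substituting these into the predictor of $z_k$. The recursions for $e^{a}_k$ and $T^{ab}_k$ are immediate from their partial-sum definitions and the pairwise orthogonality of the innovations. The innovation covariance follows from $\Sigma^{\eta}_k=\Sigma^{y}_k-\mathbb{E}[\widehat{y}_{k/k-1}\widehat{y}_{k/k-1}^T]$ by OPL; expanding $\widehat{y}_{k/k-1}$ and using $\mathbb{E}[e^{a}_{k-1}]=0$ to drop the cross terms with the constant $C_k$ gives the stated expression, while the formulas for $\Sigma^{y}_k$ and $\Sigma^{z}_k$ are direct moment computations. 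I expect the coupled induction above to be the only non-routine step, and its success rests entirely on the two covariance factorizations.
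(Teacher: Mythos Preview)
Your proposal is correct and follows essentially the same route as the paper: both arguments hinge on showing the factorizations $\mathcal{S}^{x}_{k,j}=A_k\Psi^{x}_j$ and $\mathcal{S}^{v}_{k,j}=E_k\Psi^{v}_j$ from the separable covariances, then reading off the predictors, the innovation, its covariance via the OPL, and the recursions for $e^{a}_k$ and $T^{ab}_k$. The only cosmetic difference is that you package the derivation of $\Psi^{a}_j$ as a coupled induction on $j$, whereas the paper first writes $\Psi^{a}_j$ as an explicit partial sum and then identifies it with the closed form through the definition of $T^{ab}_{j-1}$; the content is the same.
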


\begin{proof}
  According to (\ref{general expression estimators}), the estimator of the signal $x_k$ based on a set of observations $\left\{y_1, \ldots, y_L \right\}$, with $L \leq k$, is given by
  \begin{equation*}\label{general expression estimators x}
  \widehat{x}_{k/L}=\displaystyle{\sum_{j=1}^{L}}
 {\mathcal{S}}^{x}_{k,j} (\Sigma_{j}^{\eta})^{-1} \eta_j, \  \  L \leq k,
\end{equation*}
where ${\mathcal{S}}^{x}_{k,j}=\mathbb{E}[x_k\eta^{T}_j]$.
From (\ref{inno}), it is clear that
$$
\mathcal{S}^{x}_{k,j}= \mathbb{E}\big[x_k y_{j}^T\big]-(1-\overline{\lambda}_j)\mathbb{E}\big[x_k\widehat{z}^T_{j/j-1}\big].
$$
Using (\ref{available_measurements}) and (\ref{linearized actual_measurements}), and taking into account hypotheses \emph{(H1)} and \emph{(H8)}, we have
$$\mathbb{E}\big[x_ky^{T}_{j}\big]= (1-\overline{\lambda}_j)\overline{\gamma}_j A_kB_j^{T} H_j^T, \ \ 1\leq j \leq k,$$
and, from (\ref{pred_z}), the following expression is deduced
$$\mathbb{E}\big[x_k\widehat{z}^T_{j/j-1}\big]=
\overline{\gamma}_j\mathbb{E}\big[ x_k\widehat{x}^T_{j/j-1}\big]H_j^T + \mathbb{E}\big[x_k\widehat{v}^{T}_{j/j-1}\big], \ \ 1\leq j \leq k.$$
Using again the general expression (\ref{general expression estimators}) for the estimators $\widehat{x}^T_{j/j-1}$ and $\widehat{v}^T_{j/j-1}$, we can write
$$
\mathbb{E}\big[
 x_k\widehat{a}^{T}_{j/j-1}\big]
   =\displaystyle{\sum_{i=1}^{j-1}}{\mathcal S}^{x}_{k,i}(\Sigma^{\eta}_i)^{-1} ({\mathcal S}_{j,i}^{a})^T, \ \ j\geq 2, \ \ a=x,v.
$$

Consequently, $\mathcal{S}^{x}_{k,j}$ admits the following expression
$$
{\mathcal S}^{x}_{k,j}\hskip -0.05cm =
(1-\overline{\lambda}_j)\Big[ \ \overline{\gamma}_j A_kB_j^T H_j^T -(1-\delta_{j,1})\displaystyle{\sum_{i=1}^{j-1}}{\mathcal S}^{x}_{k,i}(\Sigma^{\eta}_i)^{-1} \big( \overline{\gamma}_j H_j{\mathcal S}^{x}_{j,i} + {\mathcal S}_{j,i}^{{v}} \big)^T \Big],\  \ j\geq 1,
$$
from which the following factorization is easily deduced:
\begin{equation}\label{S_x_kj}
{\mathcal S}^{x}_{k,j}= A_k\Psi^{x}_j, \ \ j\leq k,
\end{equation}
just defining
\begin{equation}\label{Psi_x_j}
\Psi^{x}_j=(1-\overline{\lambda}_j)\Big[\ \overline{\gamma}_j B_j^TH_j^T -(1-\delta_{j,1})\displaystyle{\sum_{i=1}^{j-1}}\Psi^{x}_{i}(\Sigma^{\eta}_i)^{-1} \big( \overline{\gamma}_jH_jA_j\Psi^{x}_{i} + {\mathcal S}_{j,i}^{v} \big)^T \Big] ,\  \ j\geq 1.
\end{equation}
Similarly, the following identity is derived
\begin{equation}\label{S_v_kj}
{\mathcal S}^{v}_{k,j}= E_k\Psi^{v}_j, \ \ j\leq k,
\end{equation}
with
\begin{equation}\label{Psi_v_j}
\Psi^{v}_j=(1-\overline{\lambda}_j)\Big[\ F_j^T  -(1-\delta_{j,1})\displaystyle{\sum_{i=1}^{j-1}}\Psi^{v}_{i}(\Sigma^{\eta}_i)^{-1}  \big(\overline{\gamma}_j H_jA_j\Psi^{x}_{i} + E_j \Psi^{v}_i \big)^T \Big] ,\  \ j\geq 1.
\end{equation}
Thus, defining the following vectors:
\begin{equation}\label{e_a_k}
e^{a}_k = \displaystyle{\sum_{j=1}^{k}}\Psi^{a}_j(\Sigma^{\eta}_j)^{-1}\eta_j,\ \ k\geq 1, \ \ a=x,v,
\end{equation}
and using (\ref{S_x_kj}) and (\ref{S_v_kj}), we have that
\begin{equation}\label{a_pred_fil}
 \widehat{x}_{k/k}=A_k e^{x}_{k}, \ \ \ \ \   \widehat{x}_{k/k-1}= A_k e^{x}_{k-1},  \ \ \ \ \     \widehat{v}_{k/k-1}=E_k e^{v}_{k-1},  \ \ \ \ \  k\geq 1.
\end{equation}
Substituting these expressions for $\widehat{x}_{k/k-1}$ and $\widehat{v}_{k/k-1}$ in (\ref{pred_z}) and taking into account the definition of $\Delta_k^a$ ($a=x,v$) given in (\ref{Delta}) we have that
\begin{equation}\label{pred_z_proof}
\widehat{z}_{k/k-1} = \Delta_k^x {e^{x}_{k-1}} +  \Delta_k^v {e^{v}_{k-1}}+ \overline{\gamma}_k C_k,\ \ k\geq 1.
\end{equation}
So, using  (\ref{inno}), expression (\ref{eta_innov_algorithm}) for the innovation is straightforward.

In order to obtain the innovation covariance matrix, $\Sigma^{\eta}_k = \mathbb{E}\big[\eta_k \eta_k^T\big]$,  we use (\ref{inno}) and the OPL to deduce that
$$\Sigma^{\eta}_k= \Sigma^{y}_k - (1-\overline{\lambda}_k)^2 \mathbb{E}\big[ \widehat{z}_{k/k-1} \widehat{z}_{k/k-1}^T \big],  \ k\geq  1.$$
Defining $T^{ab}_k = \mathbb{E}\big[e^{a}_k (e^{b}_k)^{T}\big],\ \ k\geq 1$ ($a,b=x,v$), and using (\ref{pred_z_proof}), we can write
$$
\begin{array}{ll}
  \mathbb{E}\big[ \widehat{z}_{k/k-1} \widehat{z}_{k/k-1}^T \big]= & \Delta_k^x T_{k-1}^{xx} (\Delta_k^{x})^T + \Delta_k^x T_{k-1}^{xv} (\Delta_k^{v})^T \\
  & +\Delta_k^v T_{k-1}^{vx} (\Delta_k^{x})^T + \Delta_k^v T_{k-1}^{vv} (\Delta_k^{v})^T +\overline{\gamma}^2_k C_k C_k^{T}
\end{array}
$$
and expression (\ref{Xi_Covarianza innov}) for the innovation covariance matrix is immediately obtained.
The formulas for $\Sigma_k^y$ and $\Sigma_k^z$ given in (\ref{Covarianzas y z}) are easily derived from the model hypotheses.

Using (\ref{e_a_k}), the recursion (\ref{e_a_k recursion}) is directly obtained and, also, the following expression for  $T^{ab}_k$ is straightforward:
\begin{equation}\label{T_ab_k}
  T^{ab}_k = \displaystyle{\sum_{j=1}^{k}}\Psi^{a}_j(\Sigma^{\eta}_j)^{-1}(\Psi^{b}_j)^T,\ \ k\geq 1,
\end{equation}
which, together with the definitions (\ref{Gamma}) and (\ref{Delta}), leads to expression (\ref{Psi}) just by substitution in (\ref{Psi_x_j}) and (\ref{Psi_v_j}). Also, from the above expression, the recursion (\ref{T_ab_k_recursion}) is immediately obtained. Finally, the filter expression (\ref{filter}) has been already obtained in (\ref{a_pred_fil}), so the proof is complete.
\end{proof}

\begin{remark}
The derivation of the estimation algorithm presented in Theorem \ref{Th_filter} is based on the linear approximation (\ref{linearized actual_measurements}) of the nonlinear observations around a nominal trajectory of the signal. Hence, the first question that arises is what to do if we do not have a reliable nominal signal trajectory. In such cases, we can follow an EKF approach; in other words, we can use the prediction estimates $\{\widehat{x}_{k/k-1}\}_{k\geq 1}$ as a nominal trajectory of the signal, because $\widehat{x}_{k/k-1}$ is our best approximation of $x_k$ before the observation at time $k$ is considered.
When doing so, the matrices $H_k$ and $C_k$ in equation (\ref{linearized actual_measurements}) are given by
$$H_k=\displaystyle \frac{\partial h_k(x)}{\partial x}\Big|_{x=A_k e^{x}_{k-1}}, \ \quad C_k=h_k(A_k e^{x}_{k-1})-H_k A_k e^{x}_{k-1},$$
since the prediction estimate at time $k$ is calculated as $\widehat{x}_{k/k-1}=A_k{e^{x}_{k-1}}$ (see (\ref{a_pred_fil})).

Substituting these matrices in (\ref{pred_z}) and taking into account that $\widehat{v}_{k/k-1}=E_k e^{v}_{k-1}$ (see (\ref{a_pred_fil})),
we obtain that
\begin{equation}\label{PRED_predictor_medida_real}
    \widehat{z}_{k/k-1}=\overline{\gamma}_kh_k(A_k e^{x}_{k-1})+E_k e^{v}_{k-1}, \ \ k\geq 1,
\end{equation}
from which expression (\ref{eta_innov_algorithm}) for the innovation  admits the following simplified form:
\begin{equation}\label{PRED_innovation}
    \eta_k=y_k-(1-\overline{\lambda}_k)\left( \overline{\gamma}_kh_k(A_k e^{x}_{k-1})+E_k e^{v}_{k-1}\right), \ \ k\geq 1,
\end{equation}
and  the innovation covariance matrix,  $\Sigma^{\eta}_k$, can be written as
\begin{equation}\label{PRED_Xi_Covarianza innov}
  \Sigma^{\eta}_k= (1-\overline{\lambda}_k)\Sigma^{\widetilde{z}}_{k/k-1}+\overline{\lambda}_k(1-\overline{\lambda}_k)\widehat{z}_{k/k-1}\widehat{z}^T_{k/k-1}
    +\overline{\lambda}_k\Sigma^{w}_{k}, \ \ k\geq 1,
  \end{equation}
whith
\begin{equation}\label{error_cov}
  \begin{array}{ll}
    \Sigma^{\widetilde{z}}_{k/k-1}= &\overline{\gamma}_kH_k\Sigma^{\widetilde{x}}_{k/k-1}H^T_k +\overline{\gamma}_k(1-\overline{\gamma}_k)h_k(A_k e^{x}_{k-1})h^T_k(A_k e^{x}_{k-1})\\
                                    & +\Sigma^{\widetilde{v}}_{k/k-1}, \ \ k\geq 1,\\
    \Sigma^{\widetilde{x}}_{k/k-1}= &A_kB_k^T - A_k T_{k-1}^{xx} A_k^T, \ \ k\geq 1,\\
    \Sigma^{\widetilde{v}}_{k/k-1}= &E_kF_k^T - E_k T_{k-1}^{vv} E_k^T, \ \ k\geq 1.\\

  \end{array}
\end{equation}
\end{remark}

\vskip.15cm
The following steps summarize the filtering algorithm and its computational procedure when the prediction estimates, $\widehat{x}_{k/k-1}=A_k{e^{x}_{k-1}}$, are used as a nominal trajectory of the signal.

\vskip .25cm

\noindent \emph{Filtering algorithm using $x_k^0=A_k e^{x}_{k-1}$ as a  nominal trajectory of the signal} \vskip .15cm

\begin{itemize}
\item[\emph{Step 1.}] Set $k=1$ and initialize the algorithm with $e^a_0=0$ and $T_0^{ab}=0$ ($a,b=x,v$).
\item[\emph{Step 2.}] Compute $H_k=\displaystyle \frac{\partial h_k(x)}{\partial x}\Big|_{x=A_k e^{x}_{k-1}}$ and, from it, compute $\Gamma^{a}_k$ and $\Delta^{a}_k$ ($a=x,v$) by (\ref{Gamma}) and (\ref{Delta}), respectively.
\item[\emph{Step 3.}] Compute $\Psi_k^{a}$ ($a=x,v$)  by (\ref{Psi}).
\item[\emph{Step 4.}] Compute $\widehat{z}_{k/k-1}$ by (\ref{PRED_predictor_medida_real})
        and, from it, compute the innovation $\eta_k$ by (\ref{PRED_innovation}).
\item[\emph{Step 5.}] Compute the matrices $\Sigma^{\widetilde{z}}_{k/k-1}$ by (\ref{error_cov})
    and, from them, compute the innovation covariance matrix $\Sigma^{\eta}_k$ by (\ref{PRED_Xi_Covarianza innov}).
\item[\emph{Step 6.}] Compute $e^a_k$ ($a=x,v$) by (\ref{e_a_k recursion}) and $T^{ab}_k$ ($a,b=x,v$)  by (\ref{T_ab_k_recursion}).
\item[\emph{Step 7.}] Compute the filter, $\widehat{x}_{k/k}$,
by (\ref{filter}).
\item[\emph{Step 8.}] Set $k=k+1$ and return to \emph{Step 2}.
\end{itemize}

\subsection{Recursive fixed-point smoothing algorithm}
\label{subsec:smoothing alg}

The following algorithm allows us to update the filter at any time $k$ as the measurements keep rolling in. More precisely, it allows us to obtain the fixed-point smoothing estimators $\widehat{x}_{k/k+1}, \widehat{x}_{k/k+2}, \ldots$.

\begin{theorem}\label{Th_smoother}

 Starting from the filter, $\widehat{x}_{k/k}$, at a fixed sampling time $k\geq 1$, the fixed-point smoothers, $\widehat{x}_{k/L}$, $L>k$, satisfy the following recursion:
  \begin{equation}\label{smoother_recursion}
   \widehat{x}_{k/L}=\widehat{x}_{k/L-1}+ {\mathcal{S}}^{x}_{k,L} (\Sigma^{\eta}_L)^{-1} \eta_{L}, \  L>k,
 \end{equation}
 with
\begin{equation}\label{S_x_kL}
  {\mathcal{S}}^{x}_{k,L}=  (1-\overline{\lambda}_{L}) \Big[\left(B_k - M_{k,L}^x \right)(\Delta_L^x)^T - M_{k,L}^v (\Delta_L^v)^T \Big], \  L>k.
\end{equation}
The matrices $ M_{k,L}^a=\mathbb{E}[x_k(e^{a}_{L})^T]$, $a=x,v$ are recursively calculated by
\begin{equation}\label{M_a_kL}
  M_{k,L}^a = M_{k,L-1}^a + {\mathcal{S}}^{x}_{k,L}(\Sigma^{\eta}_L)^{-1} (\Psi_L^{a})^T, \ L>k; \qquad  M_{k,k}^a = A_k T_k^{xa}
\end{equation}

\end{theorem}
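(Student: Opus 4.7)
The plan is to apply the innovation decomposition (\ref{general expression estimators}) once more, but now for $L>k$. Since $\widehat{x}_{k/L}=\sum_{j=1}^{L}\mathcal{S}^{x}_{k,j}(\Sigma^{\eta}_j)^{-1}\eta_j$, isolating the $j=L$ summand immediately yields the recursion (\ref{smoother_recursion}) and reduces the task to identifying $\mathcal{S}^{x}_{k,L}=\mathbb{E}[x_k\eta_L^T]$ for $L>k$, together with the auxiliary matrices $M^{a}_{k,L}$ needed to express it recursively.

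For the coefficient $\mathcal{S}^{x}_{k,L}$, I would first use the innovation formula (\ref{inno}) to write
\begin{equation*}
\mathcal{S}^{x}_{k,L}=\mathbb{E}[x_k y_L^T]-(1-\overline{\lambda}_L)\mathbb{E}[x_k\widehat{z}^{T}_{L/L-1}].
\end{equation*}
For the first term, combining the observation equations (\ref{available_measurements}) and (\ref{linearized actual_measurements}) with hypothesis \emph{(H8)} (and using that $x_k$ is zero mean so the $C_L$ contribution vanishes) leaves $(1-\overline{\lambda}_L)\overline{\gamma}_L\mathbb{E}[x_k x_L^T]H_L^T$. The only subtle point is that, because now $L>k$, the factorization in \emph{(H1)} must be applied with $s=k$, giving $\mathbb{E}[x_k x_L^T]=B_k A_L^T$; thus the first term collapses to $(1-\overline{\lambda}_L)B_k(\Delta^{x}_L)^T$. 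For the second term, I would invoke the closed form (\ref{pred_z_proof}) for $\widehat{z}_{L/L-1}$ obtained in the proof of Theorem \ref{Th_filter}, so that $\mathbb{E}[x_k\widehat{z}^{T}_{L/L-1}]$ reduces cleanly to $M^{x}_{k,L-1}(\Delta^{x}_L)^T+M^{v}_{k,L-1}(\Delta^{v}_L)^T$. Combining both contributions yields (\ref{S_x_kL}).

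For the recursion (\ref{M_a_kL}), I would start from the definition $M^{a}_{k,L}=\mathbb{E}[x_k(e^{a}_L)^T]$ and substitute the one-step recursion (\ref{e_a_k recursion}) for $e^{a}_L$; the term involving $\eta_L$ contributes exactly $\mathcal{S}^{x}_{k,L}(\Sigma^{\eta}_L)^{-1}(\Psi^{a}_L)^T$, producing the claimed update. For the initialization at $L=k$, I would invoke the factorization $\mathcal{S}^{x}_{k,j}=A_k\Psi^{x}_j$ for $j\leq k$ already established in (\ref{S_x_kj}), so that
\begin{equation*}
M^{a}_{k,k}=\sum_{j=1}^{k}\mathcal{S}^{x}_{k,j}(\Sigma^{\eta}_j)^{-1}(\Psi^{a}_j)^T=A_k\sum_{j=1}^{k}\Psi^{x}_j(\Sigma^{\eta}_j)^{-1}(\Psi^{a}_j)^T=A_k T^{xa}_k,
\end{equation*}
by the closed form (\ref{T_ab_k}).

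The main obstacle I anticipate is bookkeeping: one has to carefully separate the behaviour at indices $j\leq k$ (where the factorization $\Sigma^{x}_{k,j}=A_k B_j^T$ extracts an $A_k$ on the left and produces the tidy $\mathcal{S}^{x}_{k,j}=A_k\Psi^{x}_j$ used in Theorem \ref{Th_filter}) from the behaviour at $j=L>k$ (where the transposed factorization $\Sigma^{x}_{k,L}=B_k A_L^T$ applies instead). This asymmetry is precisely what makes the smoother coefficient $\mathcal{S}^{x}_{k,L}$ structurally different from the filter coefficient and forces the introduction of the new matrices $M^{a}_{k,L}$, rather than allowing a direct reuse of $T^{xa}_L$.
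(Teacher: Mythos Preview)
Your proposal is correct and follows essentially the same route as the paper's own proof: both obtain the recursion (\ref{smoother_recursion}) by isolating the last innovation term, compute $\mathcal{S}^{x}_{k,L}$ by pairing $\mathbb{E}[x_k y_L^T]$ (using the transposed factorization $\Sigma^{x}_{k,L}=B_kA_L^T$ for $L>k$) with $\mathbb{E}[x_k\widehat{z}^{T}_{L/L-1}]$, and derive the recursion and initialization for $M^{a}_{k,L}$ from the sum representation of $e^{a}_L$ together with $\mathcal{S}^{x}_{k,j}=A_k\Psi^{x}_j$ for $j\leq k$. Your direct appeal to (\ref{pred_z_proof}) is a slight streamlining of the paper's explicit re-expansion of $\widehat{z}_{L/L-1}$ as a sum over past innovations, but the substance is identical; note also that your derivation (like the paper's) actually produces $M^{a}_{k,L-1}$ in the expression for $\mathcal{S}^{x}_{k,L}$, which is the version consistent with the stated recursion (\ref{M_a_kL}).
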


\begin{proof}
The recursion (\ref{smoother_recursion}) is immediately deduced from (\ref{general expression estimators}), so we must find an expression for ${\mathcal{S}}^{x}_{k,L}=\mathbb{E}[x_k\eta^{T}_{L}]$, $L>k$. A similar reasoning to that used to obtain
${\mathcal{S}}^{x}_{k,j}$ ($j\leq k$) in Theorem \ref{Th_filter} yields
$$
{\mathcal S}^{x}_{k,L} =
(1-\overline{\lambda}_L)\Big[ \ \overline{\gamma}_L B_kA_L^T H_L^T -\displaystyle{\sum_{i=1}^{L-1}}{\mathcal S}^{x}_{k,i}(\Sigma^{\eta}_i)^{-1} \big( \overline{\gamma}_L H_L{\mathcal S}^{x}_{L,i} + {\mathcal S}_{L,i}^{{v}} \big)^T \Big],\  \ L >k.
$$
Taking into account that, from (\ref{S_x_kj}) and (\ref{S_v_kj}), ${\mathcal S}^{x}_{L,i}= A_L \Psi_i^{x}$ and ${\mathcal S}^{v}_{L,i}= E_L \Psi_i^{v}$, respectively, and using (\ref{Delta}), the above expression can be rewritten as
$$
{\mathcal S}^{x}_{k,L} =
(1-\overline{\lambda}_L)\Big[ \  B_k (\Delta_L^x)^T -\displaystyle{\sum_{i=1}^{L-1}}{\mathcal S}^{x}_{k,i}(\Sigma^{\eta}_i)^{-1} \big( \Delta_L^x \Psi_i^{x}+ \Delta_L^v \Psi_i^{v} \big)^T \Big],\  \ L >k.
$$
Then, defining $M_{k,L}^a= \mathbb{E}[x_k(e^{a}_{L})^T]$ and using (\ref{e_a_k}), we have that
  $$M_{k,L}^a=\displaystyle{\sum_{i=1}^{L}}{\mathcal S}^{x}_{k,i}(\Sigma^{\eta}_i)^{-1}(\Psi^{a}_{i})^{T}, \ a=x,v,$$
from which the formula (\ref{S_x_kL}) for ${\mathcal{S}}^{x}_{k,L}$ is straightforward and also the recursion  (\ref{M_a_kL}) is immediately deduced.  Its initial condition is derived just using that, for $i\leq k$, ${\mathcal S}^{x}_{k,i}= A_k \Psi_i^{x}$ and taking into account expression (\ref{T_ab_k}). The proof is then complete.

\end{proof}

\section{Numerical simulation example}
\label{sec:example}

In this section, a  simple numerical simulation example concerning the phase
modulation of analogue communication systems is presented with a dual purpose:
on the one hand, to illustrate the implementation and performance of the proposed filtering and  fixed-point smoothing algorithms and, on the other hand,  to analyze the effect of missing measurements and deception attacks
 on the estimation accuracy.
\vskip .25cm
\noindent {\em Scalar signal process: AR(2) model.}  Consider that  the
 modulating signal $ \{x_k\}_{k \geq 1}$ is a stationary stochastic process with
 autocovariance function
$$ \Sigma^x_{k,s}=Q_1\beta_1^{k-s}+Q_2\beta_2^{k-s},\quad k,s\geq 1,$$
where, for $b_1=0.1$, $b_2=-0.5$ and $\sigma^2=0.25$, the values of $\beta_i$ and $Q_i$, $i=1,2$, are given by:
{\small $$\beta_1,\beta_2=\dfrac{-b_1\pm \sqrt{b_1^2-4b_2}}2,\quad Q_1=\displaystyle
\frac{\sigma^2\beta_1\left(\beta_2^2-1\right)}{\left(\beta_2-\beta_1\right)\left(\beta_1\beta_2+1\right)},\quad
Q_2=-\displaystyle
\frac{\sigma^2\beta_2\left(\beta_1^2-1\right)}
{\left(\beta_2-\beta_1\right)\left(\beta_1\beta_2+1\right)}.
$$}
According to hypothesis {\em (H1)}, this autocovariance
function can be expressed in a separable form defining, for example,
$A_k=\left[Q_1\beta_1^k \  \ Q_2\beta_2^k\right]$ and
$B_k=\left[\beta_1^{-k} \   \  \beta_2^{-k} \right]$.
\vskip .15cm
For simulation purposes, the signal is assumed to be generated from the following second-order
autoregressive model:
$$x_k = -b_1x_{k-1}-b_2x_{k-2} + \varepsilon_k,\ k\geq3; \ \ \ x_2 = -b_1x_{1}+\varepsilon_2; \ \ \ x_1=\varepsilon_1,$$
where $ \{\varepsilon_k\}_{k \geq 1}$ is a zero-mean white Gaussian noise with variance $\Sigma^\varepsilon_{k} = \sigma^2,  \forall k$.

\vskip .25cm
\noindent{\em Uncertain measurements of the carrier signal.}
 Consider the following scalar carrier signal:
$$h_k(x_k)=cos\big(2\pi f_p k\Delta+m_A
x_k\big), \ \ k\geq 1,$$
 where $f_p= 10[Hz]$ is the carrier frequency,
$\Delta=0.01$ is the sampling period of the modulating signal $x_k$
and $m_A=2$ represents the phase sensitivity.
Clearly, if we use the prediction estimates  as a nominal trajectory of the signal, the functions $H_k$ in equation (\ref{linearized actual_measurements}) are
$$H_k=-m_A sin\big(2\pi f_p k\Delta+m_A A_k{e^{x}_{k-1}}\big), \ \ k\geq 1.$$

\vskip .25cm
According to the theoretical model, let us suppose that the  measurements of the carrier signal $h_k(x_k)$ are
given by (\ref{actual_measurements}), where:
\begin{itemize}
 \item [$-$] $\big\{\gamma_k\big\}_{ k\geq 1}$ is a sequence of  independent  Bernoulli random variables with  probabilities $P\big(\gamma_k=1\big)=\mathbb{E}[\gamma_k]=\overline{\gamma}, \  k\geq 1$.
\item [$-$] The noise process  $\{v_{k}\}_{k\geq 0}$ is generated by (\ref{time-correlated_noise}),  where  $D_k=0.75$,  $\{u_k\}_{ k\geq0}$ is a zero-mean white Gaussian noise with $\Sigma^{u}_{k} =0.01, \ \forall k\geq 0$, and  $v_0$ is a zero-mean Gaussian variable with $\Sigma^v_{0}=0.1$.
\end{itemize}

\vskip .25cm

\noindent{\em Random deception attacks.} Finally,  also according to the theoretical  model, let us suppose that the measurements
are subject to deception attacks and the observations available  for the estimation,  $y_k$,  are given by (\ref{available_measurements}), where:

\begin{itemize}
 \item [$-$]  The noise of the false data injection attacks $\{w_{k}\}_{k\geq 1}$ is a standard Gaussian white process.
  \item [$-$] The status of the attacks is described by a white sequence of Bernoulli random variables  $\{\lambda_k\}_{ k\geq 1}$,  with probabilities $P(\lambda_k=1)=\mathbb{E}[\lambda_k]=\overline{\lambda}, \  k\geq 1.$
\end{itemize}

 Under these conditions, using the proposed filtering and smoothing algorithms, the phase demodulation problem is considered. This problem consists of estimating the signal $x_k$ from the observed values $y_k$ and, for this purpose, we have implemented a MATLAB program that simulates the values of the signal, $x_k$, the uncertain measurements, $z_k$, and the available ones, $y_k$, considering different  probabilities $\overline{\gamma}$ and $\overline{\lambda}$, and provides the filtering and fixed-point smoothing estimates of $x_k$ obtained from theorems 1 and 2, respectively.

\vskip .25cm
Considering one thousand independent simulations, each  with fifty iterations of the algorithms, in order to quantify the performance of
the proposed  estimators,  we use the root mean square error (RMSE) criterion, which is widely used because it allows  straightforward quantitative comparisons.
Denoting $\{x_k^{(s)}\}_{ k=1,\ldots,50}$  the $s$-th set of the simulated data (which is taken as the $s$-th set of true values of the
signal), and  $\widehat{x}^{(s)}_{k/k+h}$ as  the filtering ($h=0$) and fixed-point smoothing ($h=2$) estimates  at time $k$ in the $s$-th simulation run, the RMSE at time $k$ is calculated  by
$$\mbox{RMSE}_k =\sqrt{\displaystyle\frac{1}{1000} {\sum_{s=1}^{1000}}\left(
x_k^{(s)}-\widehat{x}_{k/k+h}^{(s)}\right)^{2}}, \ \ \ 1\leq k\leq 50, \ \ \ h=0,2.$$

First, considering fixed probabilities $\overline{\gamma}=0.7$ and  $\overline{\lambda}=0.3$, Figure 1 displays the values $\mbox{RMSE}_k$, for $k=1,\ldots,50$, corresponding to the filtering ($\widehat{x}_{k/k}$) and fixed-point smoothing ($\widehat{x}_{k/k+2}$) estimates.
From this figure, it can be seen that, at any time $k$, the $\mbox{RMSE}_k$ of the  smoothing estimates  is smaller than that  of the filtering estimates; hence, according to the $\mbox{RMSE}_k$ criterion, the smoother outperforms the filter.
Analogous results are obtained for other values of the  probabilities $\overline{\gamma}$ and  $\overline{\lambda}$.


In order to analyze the  overall performance   of the estimations provided by the  filtering and  smoothing algorithms  as a function of  the deception  attack probability $\overline{\lambda}$, Figure 2  shows   the mean values
of  $\mbox{RMSE}_k$ corresponding to the 50 iterations, versus $\overline{\lambda}= 0.1$ to 0.9,  when $\overline{\gamma}=0.7$ and $0.9$.
As expected, it is shown that the mean values of  $\mbox{RMSE}_k$, for both filtering and smoothing estimates,  become larger as the deception attack probability $\overline{\lambda}$ increases, and this increase is less
noticeable for high values of $\overline{\lambda}$.
 Furthermore, this figure shows
the  superiority of the smoother over the
filter and also that, for $\overline{\gamma}=0.9$, the results of both estimators are better than those obtained for $\overline{\gamma}=0.7$, and this improvement is more evident for values of $\overline{\lambda}$ less than or equal to 0.5.


Finally, to  illustrate the influence of $\overline{\gamma}$ (the probability that the observations contain the signal) on the performance of the estimators, Figure 3 shows   the mean values of  $\mbox{RMSE}_k$ corresponding to the 50 iterations, for a range of
$\overline{\gamma}$ values from 0.1 to 0.9,  when $\overline{\lambda}=0.1$ and $0.3$.
For these values of $\overline{\lambda}$, as expected, the mean values of  $\mbox{RMSE}_k$, for both filtering and smoothing estimates,  decrease  as the probability $\overline{\gamma}$ increases, meaning that better estimations  are obtained  as the probability that the signal is missing in the measurements, $1-\overline{\gamma}$, decreases. Moreover, this improvement is more noticeable for high values of $\overline{\gamma}$.
 This figure also shows
the  superiority of the smoother over the
filter and that, for $\overline{\lambda}=0.1$, the results of both estimators are better than those obtained for $\overline{\lambda}=0.3$, being this improvement more significant for values of $\overline{\gamma}$ greater than or equal to 0.5.


\section{Conclusions}
\label{sec:conclusions}

Recursive algorithms for the LS filtering and fixed-point smoothing problems from nonlinear observations perturbed by time-correlated additive noise and subject to random failures, causing that some measured outputs are only noise, have been proposed. The possibility of random deception attacks adds some complexity to the mathematical model considered. Using linear approximations of the actual observations, together with the projection theory and the innovation approach, the derivation of the estimation algorithms is based on the EKF approach. Some numerical results are used to examine the performance of the proposed estimators and to analyze the effect of missing measurement and deception attack success probabilities on the estimation accuracy.

Future research topics would include extending the proposed framework to deal with other nonlinear estimation approaches, such as unscented Kalman filtering, cubature Kalman filtering, particle filtering, Gaussian-Hermite filtering, divided differences filtering or Bayesian filtering. Another interesting further research direction would be considering nonlinear multisensor systems with different communication constraints (random delays, fading measurements and packet dropouts, among others) and different communication protocols (e.g., event triggering mechanisms, random communication protocol or round-robin protocol).

\section*{Disclosure statement}

The authors report there are no competing interests to declare.

\section*{Data availability statement}
Data sharing is not applicable to this article as no new data were created or analyzed in this study.

\section*{Funding}

Grant PID2021-124486NB-I00 funded by MCIN/AEI/ 10.13039/501100011033 and by ``ERDF A way of making Europe".

\section*{Notes on contributors}

\emph{R. Caballero-\'Aguila} received the M.Sc. degree in Mathematics and Ph.D. degree in Polynomial Filtering in Systems
with Uncertain Observations, both from the University of Granada (Spain), in 1997 and 1999, respectively. In 1997, she joined the University of Ja{\'e}n (Spain), where she is currently a Professor with the Department of Statistics and Operations Research.
Her current research interests include stochastic systems, filtering, prediction and smoothing.

\vskip .25cm

\noindent \emph{Jun Hu} received the B.Sc. degree in information and computing science and M.Sc. degree in Applied Mathematics from the Harbin University of Science and Technology, Harbin, China, in 2006 and 2009, respectively, and the Ph.D. degree in control science and engineering from the Harbin Institute of Technology, Harbin, in 2013. From 2010 to 2012, he was a Visiting Ph.D. Student with the Department of Information Systems and Computing, Brunel University, Uxbridge, U.K. From 2014 to 2016, he was an Alexander von Humboldt Research Fellow with the University of Kaiserslautern, Kaiserslautern, Germany. From 2018 to 2021, he was a Research Fellow with the University of South Wales, U.K. He is currently with the Department of Applied Mathematics, Harbin University of Science and Technology, and also with the School of Automation, Harbin University of Science and Technology, Harbin. His research interests include nonlinear control, filtering and fault estimation, time-varying systems and complex networks.

\vskip .25cm

\noindent \emph{J. Linares-P\'erez} received the M.Sc. degree in Mathematics and Ph.D. in
Stochastic Differential Equations, both from the University of Granada (Spain) in 1980
and 1982, respectively. She is currently a Professor with the Department of Statistics and
Operations Research, University of Granada (Spain). Her research interests are related with
stochastic calculus and estimation in stochastic systems.

\newpage

\renewcommand\thefigure{\arabic{figure}}

\begin{figure}[H]
\centering
\includegraphics[width=13 cm]{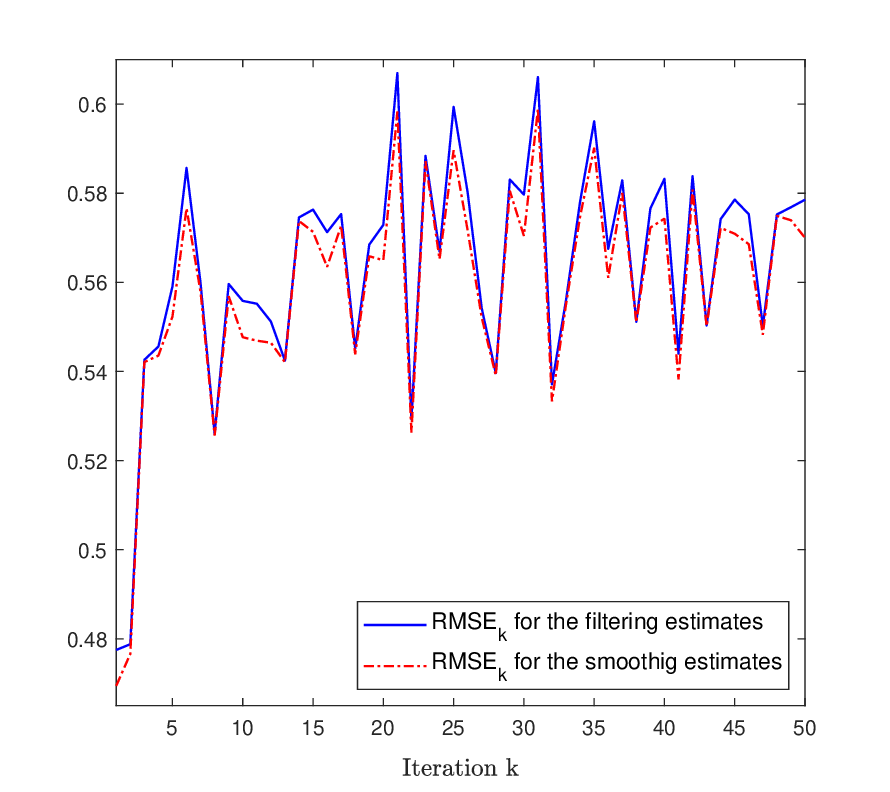}
 \caption{}
 \label{fig1}
\end{figure}

\begin{figure}[H]
 \centering
  \includegraphics[width=13 cm]{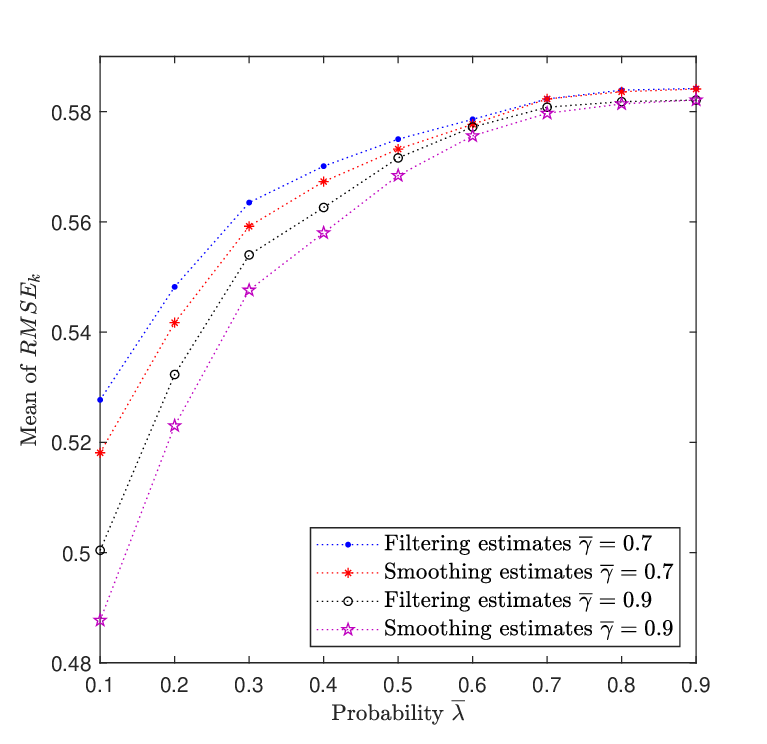}
  \caption{}
  \label{fig2}
\end{figure}

\begin{figure}[H]
 \centering
  \includegraphics[width=13 cm]{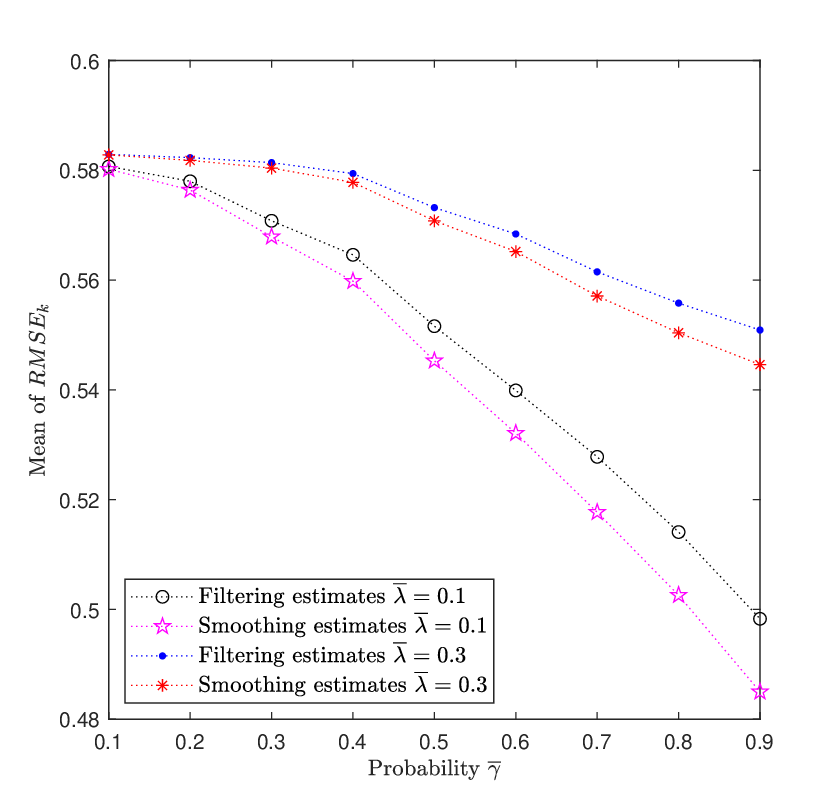}
  \caption{}
 \label{fig3}
\end{figure}

\newpage

\section*{Figure captions}

\noindent {\bf Figure 1.} $\mbox{RMSE}_k$ for the filtering and smoothing   estimates,  when $\overline{\gamma}=0.7$ and  $\overline{\lambda}=0.3$.

\vskip.25cm

\noindent {\bf Figure 2.}
Means of $\mbox{RMSE}_k$ for the filtering and smoothing   estimates  versus $\overline{\lambda}$, when $\overline{\gamma}=0.7$ and 0.9.

\vskip.25cm

\noindent {\bf Figure 3.}
Means of $\mbox{RMSE}_k$ for the filtering and smoothing   estimates versus $\overline{\gamma}$, when $\overline{\lambda}=0.1$ and 0.3.

\end{document}